\newtheorem{pavikc}{\textbf{Corollary}}
\newtheorem{pavikp}{\textbf{Proposition}}
\title{Secure Transmission in Amplify and Forward Networks for Multiple Degraded Eavesdroppers}
\author{\IEEEauthorblockN{Siddhartha Sarma, 
Samar Agnihotri
and
Joy Kuri
}
\thanks{S. Sarma and J. Kuri are with the Department of Electronic Systems Engineering, Indian Institute of Science, Bangalore, Karnataka - 560012, India (e-mail: \{siddharth, kuri\}@dese.iisc.ernet.in).}
\thanks{S. Agnihotri is with the School of Computing and Electrical Engineering, Indian Institute of Technology Mandi, Mandi, Himachal Pradesh - 175001, India (e-mail: samar@iitmandi.ac.in).}
}
\begin{document}

\maketitle

\begin{abstract}
We have evaluated the optimal secrecy rate for Amplify-and-Forward (AF) relay networks with multiple eavesdroppers. Assuming i.i.d. Gaussian noise at the destination and the eavesdroppers, we have devised technique to calculate optimal scaling factor for relay nodes to obtain optimal secrecy rate under both sum power constraint and individual power constraint. Initially, we have considered special channel conditions for both destination and eavesdroppers, which led us to analytical solution of the problem. Contrarily, the general scenario being a non-convex optimization problem, not only lacks an analytical solution, but also is hard to solve. Therefore, we have proposed an efficiently solvable \textit{quadratic program} (QP) which provides a sub-optimal solution to the original problem. Then, we have devised an iterative scheme for calculating optimal scaling factor efficiently for both the sum power and individual power constraint scenario. Necessary figures are provided in result section to affirm the validity of our proposed solution.    
\end{abstract}
\section{Introduction}
Recently a significant amount of research is going on to ensure secure communication in wireless networks. Due to broadcast nature of wireless transmission, the transmitted messages can be intercepted by eavesdroppers. Though cryptographic security can be used to counteract eavesdropping, but secrecy measure of such scheme relies on the computational complexity of cryptographic functions rather than information theoretic principles. Also, distributing secret key across the network has its own overhead. On contrary, physical layer security schemes exploit the inherent randomness present in the wireless channel and provide information theoretically provable secure communication irrespective of the computational capability of the eavesdropper(s).
\par Physical layer security came to existence when Wyner in a seminal paper \cite{wyner} showed that a non-zero secrecy rate is possible for a discrete memoryless channel if the eavesdropper's channel is degraded.
Following his work researchers have evaluated secrecy capacity and equivocation region of Single antenna and Multi-antenna systems. However, resource constrained multi-hop networks have not got enough attention though they are practically significant.
\par In a multihop wireless network, intermediate relay nodes have to follow certain relaying strategy for forwarding packets to the next relay or destination. Amplify and forward relaying scheme is simplest among them where each node transmits the message it has received after amplification (scaling). Though simplest in nature but the significance of this scheme lies in its low cost implementation and effectiveness against fading. Nevertheless, from theoretical point of view study of such a relaying scheme can help us to estimate lower bounds of the channel capacity of other communication scenarios (e.g. Analog Network Coding). Very recently, researchers have started investigating the significance of amplify and forward relaying for attaining physical layer security \cite{dong,zhang10}.
\par In our paper we consider a scenario where relay nodes uses amplify and forward relaying to convey the source message to the destination. However, due to the presence of one or more eavesdropper secrecy of communication is in jeopardy. For such a scenario secrecy rate of the network provide a good measure of performance of the system. Unlike some previous works where only total relay power constraints is assumed, we consider the individual relay power constraint also. In practice the relay nodes are generally powered by their individual power source without any means to share their power sources (e.g. battery). Therefore, individual relay constraint is more relevant in practical situations and general. Assuming the availability of global channel state information (CSI), we consider a two hop network consists of a single source, a single destination and multiple relay nodes. As each of the relay node connects the source and destination separately, they form a diamond like structure and hence named accordingly. We begin our analysis with a symmetric diamond network and provide the analytical solution for optimal scaling factor of relay nodes. We then relax the symmetric network assumption and analyze the scenario where eavesdropper's channel vector is scaled version of receiver's channel. For general case where multiple eavesdroppers are involved we have multiple secrecy rate corresponding to each eavesdropper and the objective would be to maximize the minimum of them over the same constraint set. In our paper, we provide a sub-optimal solution for individual relay constraints, whereas we propose an iterative algorithm for secrecy rate in case of sum constraint and individual constraint. 
\par We summarize our contribution as follows:
\begin{itemize}
\item For symmetric diamond network we provide analytical solution for secrecy rate.
\item We discuss and analyze a step-by-step procedure for calculating optimal secrecy rate when eavesdropper's channel vector is scaled version of receiver's channel. 
\item For general case we discuss the sub-optimal ``zero-forcing" solution for individual relay constraint. There we reformulate the optimization problem as a quadratic program which can be solved efficiently.
\item We propose an iterative algorithm for obtaining optimal secrecy rate for sum relay and individual constraint scenarios. 
\end{itemize}
\subsection*{Organization}
Our paper is organized as follows: In Section~\ref{sec:rel}, we survey the related work. The system model and notations are introduced in Section~\ref{sec:model}.
In Section~\ref{sec:secrate}, we analyze the secrecy rate for several channel conditions of receiver and eavesdropper.
We illustrate numerical results of the formulations in Section \ref{sec:res}. We conclude our paper in Section \ref{sec:concl} with a brief summary and possible future work.
\section{Related Works}\label{sec:rel}
Amplify and forward scheme was introduced by Schein and Gallager \cite{schein2001phd} and was considered as a mean of cooperative communication by \cite{laneman}, \cite{zhao}, \cite{borade}. Later several researchers have reported that cooperative scheme like amplify and forward not only provides robustness against channel variations but also ensures non-zero secrecy rate in certain scenarios where otherwise it is zero. For example, if  the source to destination channel is poor as compared to source to eavesdropper channel, then by using appropriate scaling factor in relay nodes we can cancel out the received signal at eavesdropper and thereby improve the secrecy rate. As we have assumed gaussian channel model it is worth mentioning that the secrecy capacity for Gaussian Wiretap channel was evaluated by Cheong and Hellman \cite{Leung}. Later the effect of fading on gaussian wiretap channel model was analysed by \cite{liang2008} and \cite{gopala}. The wire-tap model in context of multi-antenna system was considered and analysed by \cite{parada}, \cite{khisti}, \cite{shaf} and \cite{li2009}. But both the single and multi-antenna system were limited to single-hop network. As the multi-hop wireless networks are equally significant, so recently research in this area has got a good pace.  Lai and Gamal \cite{lai07} evaluated the secrecy capacity of relay-eavesdropper channel for different cooperative schemes and also evaluated corresponding equivocation region. Authors in \cite{dong09} reported the improvement in physical layer security with the help of cooperating relays. Same authors later elaborated the significance of amplify and forward scheme for attaining physical layer security in \cite{dong}. But in their work they considered total relay constraint criteria and provided bounding results for secrecy capacity. For multiple eavesdroppers scenario they suggested the ``zero-forcing solution" where by beam-forming the transmitted signal is nullified at each eavesdropper. The more practical individual relay constraint criteria was considered in \cite{zhang10}, \cite{yang2013cooperative}. The authors of the both the papers provided an iterative algorithm for calculating the optimal amplification vector for relay nodes for maximizing secrecy rate using semi-definite relaxation. Our work discusses the special cases of the model considered in \cite{yang2013cooperative} and investigate the nature of the solution for those special cases. Our approach significantly differs from the techniques used in \cite{zhang10} and \cite{yang2013cooperative} as we have converted our problem into a convex optimization problem by using a noble transformation of variables. Further we discuss the convergence of the solution to the global optimum.
 Infact we identify that for certain special cases we can evaluate the optimal scaling factor analytically. Those analytical results are motivated from \cite{jing} and \cite{agnihotri11}, where authors have devised schemes to find the optimal amplification vector for attaining the capacity of amplify and forward network under individual relay constraints in absence of secrecy criteria. 
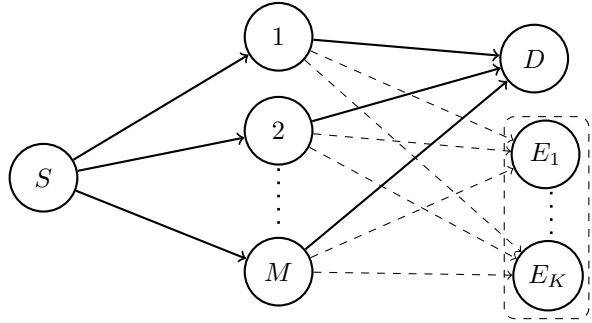
\begin{figure}[!h]
\centering
\begin{tikzpicture}[scale=1.25]
\tikzstyle{every node}=[draw,shape=circle,minimum size=0.9cm,style=thick]
\node (v0) at (180:2.5) {$S$};
\node (v1) at (90:1.5) {$1$};
\node (v2) at (90:0.5) {$2$};
\node (v3) at (270:1) {$M$};
\node (v4) at (25:3) {$D$};
\node (v5) at (5:2.85) {$E_1$};
\node (v6) at (340:3.05) {$E_K$};
\draw[style=thick,->] (v0) -- (v1);
\draw[style=thick,->] (v0) -- (v2);
\draw[style=thick,->] (v0) -- (v3);
\draw[style=thick,->] (v1) -- (v4);
\draw[style=thick,->] (v2) -- (v4);
\draw[style=thick,->] (v3) -- (v4);
\draw[dashed,->](v1) -- (v5);
\draw[dashed,->](v2) -- (v5);
\draw[dashed,->](v3) -- (v5);
\draw[dashed,->](v1) -- (v6);
\draw[dashed,->](v2) -- (v6);
\draw[dashed,->](v3) -- (v6);
\draw[loosely dotted,line width=1](0,0.10) -- (0,-.6);
\draw[loosely dotted,line width=1](2.9,-0.15) -- (2.9,-.6);
\draw[dashed,,rounded corners] (2.4,0.65) rectangle (3.3,-1.5);
\end{tikzpicture}
\caption{Simple AF network with multiple ($K>1$) eavesdroppers}
\label{fig:model}
\end{figure}
\section{System Model}\label{sec:model}
The system model consists of a single source, a single destination, $M$ relay nodes and eavesdroppers as shown in figure \ref{fig:model}. The channel gain from source node to the $i^{th}$ relay node is denoted by a real constant $h_{s,i}$. Similarly channel gain from the $i^{th}$ relay node to destination or to eavesdropper is denoted by $h_{i,d}$ and $h_{i,e}$, respectively. Now, if we consider discrete time instants and neglect the transmission delays, then signal received at each relay node due to the source can be expressed as:
\begin{equation}\label{eq:rel}
y_i[n]= h_{s,i}x_s[n] + z_i[n] 
\end{equation} 
where $x_s[n]$ is the channel input at time instant $n$ and $z_i[n]$ is the noise at relay $i^{th}$ node. We assume that $\{z_i[n]\}, -\infty <n<\infty$ forms an  i.i.d. sequence of Gaussian random variables with zero mean and variance $\sigma^2$, \textit{i.e.} $z_i[n] \sim \mathcal{N}(0,\sigma^2)$ which is independent of the input signal at that receiver. In case of AF scheme each relay node scales its received signal before transmitting. The maximum scaling factor is determined by the individual power constraint of relay node and the received signal power at that relay node. We assume a power constraint over transmitted signal from each node which can be expressed as:
\begin{equation*}
E[x_i^2[n]] \le P_i,\quad  -\infty <n<\infty,\quad i \in \{s,1,2,\dots,M\}
\end{equation*}
So, transmitted signal from each relay node can be written as:
\begin{align}\label{eq:beta}
 x_i[n+1]=\beta_iy_i[n],\quad -\beta_{i,max} \le\beta_i \le \beta_{i,max}
\mbox{ where } \beta_{i,max}^2= \frac{P_i}{h_{s,i}^2 P_s + \sigma^2}
\end{align}

\par Now, we can express the received signal at destination and eavesdroppers in following manner:
\begin{align}
y_d[n] = & \sum\limits_{i=1}^{M}h_{i,d}x_i[n]+z_d[n] \label{eq:dest}\\
y_k[n] = & \sum\limits_{i=1}^{M}h_{i,k}x_i[n]+z_k[n] \label{eq:eve}
\end{align}
Here $z_d[n]$ and $z_k[n]$ are mutually independent i.i.d. random variables distributed according to $\mathcal{N}(0,\sigma^2)$ and also independent of $z_i[n]$.
For the layered network shown in figure \ref{fig:model} both source signal ($x_s[.]$) and noise signals ($z_{(.)}[.]$) arrive at destination or eavesdropper traversing different but same respective delayed path. Therefore, the received signal at destination and eavesdroppers are free of intersymbol interference. So, we can omit the time indexing and use equation \eqref{eq:rel}, \eqref{eq:beta} and \eqref{eq:dest} to write the following expression.
\begin{equation}
y_d=\sum\limits_{i=1}^{M}h_{s,i}\beta_ih_{i,d}x_s + \sum\limits_{i=1}^{M}\beta_ih_{i,d}z_i + z_d
\end{equation}
In similar manner the received signal at eavesdropper can be written using equation \eqref{eq:rel}, \eqref{eq:beta} and \eqref{eq:eve}.
\begin{equation}
y_k=\sum\limits_{i=1}^{M}h_{s,i}\beta_ih_{i,k}x_s + \sum\limits_{i=1}^{M}\beta_ih_{i,k}z_i + z_k,\;k \in \{1,2,\cdots K\}
\end{equation}
The secrecy rate at destination for such a network model can be written as \cite{wyner}:  
\begin{equation}
R_s(P_s)=\underset{k}{\min}\; [I(x_s;y_d)-I(x_s;y_k)],\; k \in \{1,2,\cdots K\}
\end{equation}
where $I(x_s;y)$ represents the mutual information between random variable $x_s$ and $y$. \textit{Secrecy capacity} is defined as the maximum achievable secrecy rate over all the distribution of source symbol and scaling vector $\boldsymbol{\beta}$. Now, due to \cite{Leung} we know that secrecy capacity is attained for Gaussian channels when the inputs are distributed according to $\mathcal{N}(0,P_s)$ where $\mathbf{E}[x_s^2] = P_s$. Therefore, optimal secrecy rate can be written as following optimization problem.
\begin{subequations} \label{eq:opt1}
\begin{align}\label{eq:secrate}
& R_s^*(P_s)=  \underset{\boldsymbol{\beta}}{\max}\underset{k \in \{1,2,\dots,K\}}{\min}\;\left[R_d(P_s,\boldsymbol{\beta}) - R_k(P_s,\boldsymbol{\beta})\right]\\
& =  \underset{\boldsymbol{\beta}}{\max}\underset{k \in \{1,2,\dots,K\}}{\min}\;\left[ \frac{1}{2}\log\left(1+SNR_d \right)-\frac{1}{2}\log\left(1+SNR_k \right) \right]
\end{align}  
\end{subequations}
 where $SNR_l=  \frac{\left( \sum\limits_{i=1}^{M} h_{s,i}\beta_ih_{i,l}\right) ^2}{1 + \sum\limits_{i=1}^{M}(\beta_i h_{i,l})^2}\frac{P_s}{\sigma^2}$. 
\par If we use the following vector and matrix notations then SNR at destination or eavesdropper can be represented as  
\begin{align}\label{eq:snr_vec}
SNR_l= \frac{\left( \sum\limits_{i=1}^{M} h_{s,i}\beta_ih_{i,l}\right) ^2}{1 + \sum\limits_{i=1}^{M}(\beta_i h_{i,l})^2}\frac{P_s}{\sigma^2}=\frac{(\mathbf{h_{s,l}}^t\bm{\beta})^2}{1+\bm{\beta}^tdiag(\mathbf{h_l})\bm{\beta}}\frac{P_s}{\sigma^2} 
\end{align}
where 
$\mathbf{h_{s,l}}=[h_{s,1}h_{1,l},h_{s,2}h_{2,l},\cdots,h_{s,M}h_{M,l}]^t$ \\
$diag(\mathbf{h_l})$=$\begin{bmatrix}
h_{1,l}^2& 0 &\cdots & 0\\
0 & h_{2,l}^2 & \cdots & 0\\
\vdots & \vdots & \vdots & \vdots \\
0 & 0 & \cdots & h_{M,l}^2
\end{bmatrix}$ and $l \in \{d,1,2,\cdots,K\}$
  \section{Secrecy rate analysis}\label{sec:secrate}
  \subsection{Special Cases}
  \subsubsection{Symmetric Network}
  We consider a $M$ relay node symmetric network with single eavesdropper. By symmetric network we mean that the channel gain from source to all the relays are equal, \textit{i.e.} $h_{s,i}=h_{s,j}=h_s,\;\forall i,j \in \{1,2,\dots, M\}$. Same is true for the respective channel gains from any relay node to destination and relay node to eavesdropper; we denote them by $h_d$ and $h_e$, respectively. This kind of scenario can occur in real world if the relay nodes are co-located \textit{i.e.} relay nodes are quite close to each other or if there is a single relay with multiple antennas. For such a network model secrecy rate can be written as: 
  \begin{equation}\label{eq:gammaeq}
  R_s(\bm{\beta})=\frac{1}{2}\log\frac{\left( 1+\gamma'_s\frac{\left( \sum\limits_{i=1}^{M}\beta_i\right) ^2}{\nu+\sum\limits_{i=1}^{M}\beta_i^2}\right) }{\left( 1+\gamma'_s\frac{\left( \sum\limits_{i=1}^{M}\beta_i\right) ^2}{\mu+\sum\limits_{i=1}^{M}\beta_i^2}\right)}
  \end{equation}
  where $\gamma_s=\frac{P_s}{\sigma^2},\;\gamma'_s=h_s\gamma_s,\; \nu=\frac{1}{h_d^2},\;\mu=\frac{1}{h_e^2}$
  \begin{pavikp}
  For symmetric M relay node network optimum $\beta$ values are all equal, \textit{i.e.} $\beta^*_1=\beta^*_2=\dots=\beta^*_M$. 
  \end{pavikp}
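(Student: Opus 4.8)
The plan is to reduce the $M$-dimensional optimization in \eqref{eq:gammaeq} to a one-variable monotonicity statement. First I would observe that $R_s(\bm{\beta})$ depends on $\bm{\beta}$ only through the two symmetric quantities $a:=\sum_{i=1}^{M}\beta_i$ and $b:=\sum_{i=1}^{M}\beta_i^{2}$, and in fact only through $a^{2}$ and $b$, since \eqref{eq:gammaeq} rewrites as $R_s=\tfrac12\log\frac{1+\gamma'_s a^{2}/(\nu+b)}{1+\gamma'_s a^{2}/(\mu+b)}$. Next I would fix $b$ and show that $R_s$ is strictly increasing in $a^{2}$: setting $u=a^{2}$, $c_1=\gamma'_s/(\nu+b)$, $c_2=\gamma'_s/(\mu+b)$, the argument of the logarithm is $(1+c_1u)/(1+c_2u)$, whose $u$-derivative equals $(c_1-c_2)/(1+c_2u)^{2}$; the degraded-eavesdropper hypothesis $h_e^{2}<h_d^{2}$, i.e.\ $\nu<\mu$, gives $c_1>c_2>0$ and hence a strictly positive derivative. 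So, with $b$ held fixed, maximizing $R_s$ is the same as maximizing $(\sum_i\beta_i)^{2}$.

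The second ingredient is the Cauchy--Schwarz inequality $\big(\sum_{i=1}^{M}\beta_i\big)^{2}\le M\sum_{i=1}^{M}\beta_i^{2}=Mb$, with equality precisely when all the $\beta_i$ coincide. I would also note that the all-equal vector $\bm{\beta}'$ with $\beta'_i=\sqrt{b/M}$ is feasible whenever $b$ is attained by some admissible $\bm{\beta}$: the per-relay bound $|\beta_i|\le\beta_{max}$ forces $b\le M\beta_{max}^{2}$, hence $\sqrt{b/M}\le\beta_{max}$, and since the feasible region is the symmetric box, re-balancing the coordinates keeps us inside it.

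Combining the two steps: a maximizer $\bm{\beta}^{*}$ exists, because the box is compact and $R_s$ is continuous on it (all denominators being bounded away from $0$). If the entries of $\bm{\beta}^{*}$ were not all equal, set $b=\sum_i(\beta^{*}_i)^{2}>0$; strict Cauchy--Schwarz gives $(\sum_i\beta^{*}_i)^{2}<Mb$, so the feasible all-equal vector $\bm{\beta}'$ with the same $b$ has a strictly larger $a^{2}$, whence $R_s(\bm{\beta}')>R_s(\bm{\beta}^{*})$ by the first step, contradicting optimality. Hence $\beta^{*}_1=\beta^{*}_2=\dots=\beta^{*}_M$. The routine part is the derivative computation; the step I expect to matter most is the $(a^{2},b)$-reduction together with the sign condition $\nu<\mu$ coming from degradedness --- without that hypothesis the claim fails, since for $h_e^{2}\ge h_d^{2}$ the secrecy rate is maximized at $a=0$, which no nonzero all-equal vector attains. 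It is also worth recording that $\beta^{*}_i=-\sqrt{b/M}$ does equally well, so ``all entries equal'' is the exact description of the optimizers.
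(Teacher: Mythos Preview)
Your argument is correct and, in fact, cleaner than the paper's own proof. The paper proceeds by writing out $\partial R_s/\partial\beta_i=0$ for each $i$, subtracting the $i$th and $j$th first-order conditions, and factoring to obtain $(\beta_i-\beta_j)\cdot(\text{positive factor})=0$; this identifies interior stationary points but does not, by itself, address the box constraints $|\beta_i|\le\beta_{\max}$ or distinguish maxima from other critical points. Your route---reducing to the symmetric statistics $(a^{2},b)$, checking that the log-argument is strictly increasing in $a^{2}$ for fixed $b$ via the sign $c_1>c_2$ forced by $\nu<\mu$, and then invoking Cauchy--Schwarz together with the observation that ``re-balancing'' to the all-equal vector with the same $b$ preserves feasibility in the symmetric box---is a global argument that handles boundary and interior simultaneously. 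What the calculus approach would buy, in principle, is an explicit stationarity equation one can carry into the subsequent Corollary; what your approach buys is rigor and transparency, and it also makes explicit why the degradedness hypothesis $h_e^{2}<h_d^{2}$ (equivalently $\nu<\mu$) is essential, a point the paper leaves implicit.
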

  \begin{proof}
  By taking derivative of $R_s$ with respect to $\beta_i$, we get\begin{multline*}
  2\gamma'_s\Big( \sum\limits_{j=1}^{M}\beta_i\Big) \Big( \mu-\nu\Big) \Bigg\{ \Big( \sum\limits_{i=1}^{M}\beta_i^2-\beta_i\Big( \sum\limits_{j=1}^{M}\beta_j\Big) \Big)
  \Big( 2 \sum\limits_{i=1}^{M}\beta_i^2+\nu+\mu\Big)
  -\Big( \sum\limits_{i=1}^{M}\beta_i^2\Big) ^2-\gamma'_s\beta_i\Big( \sum\limits_{j=1}^{M}\beta_i\Big)^3+\nu\mu\Bigg\} =0
  \end{multline*}
  Now, as $\left( \sum\limits_{j=1}^{M}\beta_i\right)=0$ is not the optimal solution and $\mu\ne\nu$, hence
  \begin{multline}\label{gammaeq1}
  \left( \sum\limits_{i=1}^{M}\beta_i^2-\beta_i\left( \sum\limits_{j=1}^{M}\beta_j\right) \right)\left( 2 \sum\limits_{i=1}^{M}\beta_i^2+\nu+\mu\right) -\left( \sum\limits_{i=1}^{M}\beta_i^2\right) ^2
  -\gamma'_s\beta_i\left( \sum\limits_{j=1}^{M}\beta_i\right)^3+\nu\mu =0
  \end{multline}
  Similar equation can be obtained if we take derivative with respect to $\beta_j,\;j\ne i \text{ and } j \in \{1,2,\dots,M\}$. Subtracting this equation from equation \eqref{gammaeq1} we get
  {\small \begin{equation}
  (\beta_i-\beta_j)\left( \sum\limits_{j=1}^{M}\beta_j\right)\left\lbrace\left( 2 \sum\limits_{i=1}^{M}\beta_i^2+\nu+\mu\right)+\gamma'_s\left( \sum\limits_{j=1}^{M}\beta_i\right)^3\right\rbrace=0 
  \end{equation}}
  Hence, $\beta_i=\beta_j,\;i,j\in\{1,2,\dots,M\}$
  \end{proof}
  \begin{pavikc}
  Optimal solution of equation \eqref{eq:gammaeq} can be attained by solving the following optimization problem
  \begin{equation*}
  \max_{\beta}\frac{\left( 1+\frac{P_s}{\sigma^2} \frac{M^2\beta^2 h_s^2 h_d^2}{1 + M\beta^2 h_d^2}\right) }{\left( 1+\frac{P_s}{\sigma^2} \frac{M^2\beta^2 h_s^2 h_e^2}{1 + M\beta^2 h_e^2}\right) }
  \end{equation*} 
  \begin{equation*}
  \beta^*_{AF} = \begin{cases}
            \bigg[\frac{\sigma^2}{P_R M^2h_d^2 h_r^2}\bigg]^{1/4} \beta_{max}^{1/2}, \quad \frac{\sigma^2}{P_R \beta_{max}^2} < M^2h_d^2 h_e^2 \\
            \beta_{max}, \quad \mbox{o.w.}
            \end{cases}
  \end{equation*}
  \end{pavikc}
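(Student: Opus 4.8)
The plan is to use the preceding Proposition to collapse the $M$-dimensional problem to a scalar one, and then to solve the scalar problem with an AM--GM argument. First I would invoke the Proposition: every optimal scaling vector satisfies $\beta_1^\ast=\cdots=\beta_M^\ast=:\beta$, so it suffices to maximize \eqref{eq:gammaeq} over the single scalar $\beta\in[-\beta_{max},\beta_{max}]$. Substituting $\beta_i=\beta$ gives $\sum_i\beta_i=M\beta$ and $\sum_i\beta_i^2=M\beta^2$; plugging these in together with $\nu=1/h_d^2$, $\mu=1/h_e^2$ and the definition of $\gamma'_s$, the two inner fractions in \eqref{eq:gammaeq} become exactly $SNR_d=\frac{P_s}{\sigma^2}\frac{M^2\beta^2h_s^2h_d^2}{1+M\beta^2h_d^2}$ and $SNR_e=\frac{P_s}{\sigma^2}\frac{M^2\beta^2h_s^2h_e^2}{1+M\beta^2h_e^2}$. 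Since $\tfrac12\log(\cdot)$ is strictly increasing, maximizing $R_s$ is equivalent to maximizing the quotient $(1+SNR_d)/(1+SNR_e)$, which is precisely the program displayed in the Corollary. Both $SNR_d$ and $SNR_e$ depend on $\beta$ only through $\beta^2$, so the objective is even in $\beta$ and I may restrict to $\beta\in[0,\beta_{max}]$.

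Next I would pass to $t:=\beta^2\in[0,\beta_{max}^2]$ and set $\rho:=\frac{P_s}{\sigma^2}Mh_s^2$. A one-line computation gives $1+SNR_d=\frac{1+(1+\rho)Mh_d^2t}{1+Mh_d^2t}$ and likewise with $h_e$ replacing $h_d$. Forming the ratio, expanding numerator and denominator (both have the same leading term $(1+\rho)M^2h_d^2h_e^2\,t^2$) and subtracting, one obtains the normal form
\[
f(t):=\frac{1+SNR_d}{1+SNR_e}=1+\frac{M\rho\,(h_d^2-h_e^2)\,t}{1+\alpha t+(1+\rho)M^2h_d^2h_e^2\,t^2},\qquad \alpha>0 .
\]
In the symmetric-network setting the eavesdropper is degraded, so $h_d^2>h_e^2$ and the numerator coefficient is strictly positive; hence maximizing $f$ over $[0,\beta_{max}^2]$ is equivalent to maximizing $g(t):=t\,/\,(1+\alpha t+\gamma t^2)$ with $\gamma:=(1+\rho)M^2h_d^2h_e^2>0$.

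For the scalar optimization I would write $g(t)=\bigl(\tfrac1t+\gamma t+\alpha\bigr)^{-1}$ on $(0,\beta_{max}^2]$ and use that $t\mapsto\tfrac1t+\gamma t$ is strictly convex with unique minimizer $t_0=1/\sqrt{\gamma}$ (AM--GM), so it is decreasing on $(0,t_0)$ and increasing afterwards; consequently $g$ is strictly unimodal, increasing on $[0,t_0]$ and decreasing on $[t_0,\infty)$. Therefore: if $t_0\le\beta_{max}^2$ the interior point $t^\ast=t_0$ is feasible and optimal, giving $\beta^\ast=t_0^{1/2}=\gamma^{-1/4}$; otherwise $g$ is nondecreasing on all of $[0,\beta_{max}^2]$ and the maximum is at the endpoint $\beta^\ast=\beta_{max}$ (the left endpoint $t=0$ gives $g=0$ and is never optimal). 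Finally, substituting $\gamma=(1+\rho)M^2h_d^2h_e^2$ and rewriting $1+\rho$ in terms of $P_s,\sigma^2$ and $\beta_{max}$ turns $\gamma^{-1/4}$ into the stated $\bigl[\sigma^2/(P_RM^2h_d^2h_e^2)\bigr]^{1/4}\beta_{max}^{1/2}$, and turns the feasibility test $t_0\le\beta_{max}^2$ into the stated inequality $\sigma^2/(P_R\beta_{max}^2)<M^2h_d^2h_e^2$, yielding exactly the two-case formula.

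The only real work is the algebraic step of reducing $f(t)$ to the ``$1+\frac{(\text{positive const})\,t}{\text{quadratic in }t}$'' normal form and correctly identifying the leading coefficient $\gamma$ of the denominator; once that is in hand the optimization is a single AM--GM line. Minor points to be careful about are: confirming the claimed unimodality so that the interior-versus-boundary dichotomy is genuinely exhaustive (equivalently, that $g'$ changes sign exactly once on $(0,\infty)$); checking the $\beta\to-\beta$ symmetry so that restricting to $\beta\ge 0$ loses nothing; and reconciling the $\gamma'_s$/$P_R$/$\beta_{max}$ notation of the symmetric-network subsection with the general power-constraint definition so that the final constants line up with the stated expressions.
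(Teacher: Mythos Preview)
Your argument is correct and more detailed than what the paper actually does. Both you and the paper start identically: invoke the Proposition, set all $\beta_i=\beta$, and obtain the displayed scalar program. At that point the paper does \emph{not} solve the scalar problem directly; it simply observes that under the rescaling $h_{sr}=\sqrt{M}\,h_s$, $h_{r,d}=\sqrt{M}\,h_d$, $h_{r,e}=\sqrt{M}\,h_e$ the objective becomes exactly the secrecy objective of a single-relay AF network, and then quotes the known closed-form maximizer for that case without further computation.

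Your route, by contrast, is self-contained: you pass to $t=\beta^2$, reduce the quotient to the normal form $1+\text{(positive const)}\,t/(1+\alpha t+\gamma t^2)$, and then dispose of the scalar maximization with a single AM--GM step on $1/t+\gamma t$. This buys you an explicit proof of unimodality and of the interior-versus-boundary dichotomy, neither of which the paper supplies (it outsources them to the single-relay result). The paper's approach is shorter on the page but relies on the reader knowing, or separately deriving, the single-relay optimizer; yours makes the corollary stand on its own. Your closing caveat about reconciling the $P_R/\beta_{max}$ constants is well placed: the paper's displayed formula involves some notational looseness (e.g.\ $h_r$ versus $h_e$, and the implicit identification of $1+\rho$ with $P_R/(\sigma^2\beta_{max}^2)$), and matching your $\gamma^{-1/4}$ to the stated expression does require carefully unpacking those conventions.
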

  \begin{proof}
  As the optimal solution corresponds to equal values of $\beta_i$'s, so replacing them by $\beta$ in equation \eqref{eq:gammaeq} we get the above expression. Now it is easy to see that if we introduce the following parameters: $h_{sr}=\sqrt{M}h_s$, $h_{r,d}=\sqrt{M}h_d$ and $h_{r,e}=\sqrt{M}h_e$ then the above problem become single relay secrecy rate maximization problem. So, we can use the solution of that problem which can be calculated easily.
  \end{proof}
  \subsection{Asymmetric Network with $\mathbf{h_e}=\alpha \mathbf{h_d}$}
  For sake of analysis we consider a new variable $\omega_i=h_{i,d}\beta_{i}$, its upper bound $\omega_{i,max}=h_{i,d}\beta_{i,max}$ and a new parameter $g_{s,i}=\sqrt{\frac{P_s}{\sigma^2}}h_{s,i},\;\forall i \in \{1,2,\cdots,M\}$. Therefore, secrecy rate expression can be written as:
  \begin{align*}
  R_s=\frac{1}{2}\log\frac{\left( 1+\frac{(\sum\limits_{i=1}^{M}g_{s,i}\omega_i)^2 }{1+\sum\limits_{i=1}^{M}\omega_i^2}\right)}{\left( 1+\frac{\alpha^2(\sum\limits_{i=1}^{M}g_{s,i}\omega_i)^2}{1+\alpha^2(\sum\limits_{i=1}^{M}\omega_i^2)}\right) }
  \end{align*}
   It is easy to see that non-zero secrecy rate can be obtained only if $\alpha <1$. If we denote $\sum\limits_{i=1}^{2}\omega_i^2=r^2$ then the above expression can be written as: $\frac{1+\varrho_1\Psi(\bm{\omega})}{1+\varrho_2\Psi(\bm{\omega})}$, 
  where $\Psi(\mathbf{\omega})=(\sum\limits_{i=1}^2 g_{s,i}\omega_i)^2$, $\varrho_1=\frac{1}{1+r^2}$ and $\varrho_2=\frac{\alpha^2}{1+\alpha^2r^2}$.
  It is easy to see that as $\alpha<1$, so $\varrho_1 > \varrho_2$ and therefore the expression $\frac{1+\varrho_1\Psi(\bm{\omega})}{1+\varrho_2\Psi(\bm{\omega})}$ will be maximized when $\Psi(\bm{\omega})$ is maximum.
  \par\textit{Individual Constraint:} To maximize $\Psi(\bm{\omega})$ we formulate following optimization problem.
  \begin{align*}
  \max \quad & \left( \sum\limits_{i=1}^M g_{s,i}\omega_i\right) ^2\\
  \text{s.t.} & \sum\limits_{i=1}^{M}\omega_i^2=r^2
  \end{align*} 
  One can easily calculate the optimal $\bm{\omega}$ vector which is equal to $\frac{\mathbf{g_s}}{||\mathbf{g_s}||}r$.
  By replacing this in the secrecy rate equation along with the sum constraint we can formulate following optimization problem in $r$:
   \begin{align*}
  \max_r \quad\frac{1+\frac{||\mathbf{g_s}||^2r^2}{1+r^2}}{1+\frac{\alpha^2||\mathbf{g_s}||^2r^2}{1+\alpha^2r^2}}
  \end{align*} 
  The solution to this problem is $r^*=\frac{1}{\sqrt{\alpha\sqrt{1+||\mathbf{g_s}||^2}}}$.
  Now, if  $\frac{\mathbf{g_s}}{||\mathbf{g_s}||}r^*$
   satisfy individual constraints of $\omega_i$ then this is the optimal solution, else we move to next step.
  This case considers the scenario when the solution obtained using above method violates any of the $\omega_i$'s constraints.\par We arrange $\omega_i$'s according to $\frac{g_{s,i}}{\omega_{i,max}}$ and denote them as $\omega_{(1)} \ge \omega_{(2)} \ge \dots \omega_{(M)}$. We follow the same ordering for $g_{s,i}$'s also and denote the ordered values as $g_{s,(i)}$. 
      Let us assume that the individual constraint is violated upto $m^{th}$ ordered variable. As the $m$ variables have violated their constraints we can replace them by their individual upper bound, whereas, for the rest of the variables we still need to find their optimum values. Now we introduce some notations for the analysis in the next section.
      \begin{align*}
      p_m &=\sum\limits_{i=1}^{m}g_{s,(i)}\omega_{(i),max}\\
      q_m &=\sum\limits_{i=1}^{m}\omega_{(i),max}^2\\
      s_m &=\sum\limits_{i=m+1}^{M}g_{s,(i)}^2
      \end{align*}
      Following the argument presented in \cite{jing} we can
      easily upper bound those variables.     
      
      {\small\begin{align*}
      \mathbf{g}_{s,\{(m+1),\cdot,(M)\}}^t\bm{\omega}_{\{(m+1),\cdot,(M)\}}&\le\mathbf{g}_{s,\{(m+1),\cdot,(M)\}}^t\lambda_m\mathbf{g}_{s,\{(m+1),\cdot,(M)\}}\\
      \text{for all }||\bm{\omega}_{\{(m+1),\cdot,(M)\}}||&=\sqrt{r^2-q_m}
      \end{align*}
          }
          where $\mathbf{u}_{\{(m+1),\cdot,M\}}=[u_{(m+1)},\cdots,u_{(M)}]$ and $\mathbf{u}\in \{\mathbf{g}_s,\bm{\omega}\}$.
       In simple words the objective function will be maximized when the vector composed of those variables lies in the direction of $[g_{s,(m+1)},\cdots,g_{s,(M)}]^t$ vector. Therefore, we have to find the optimal scaling factor ($\lambda_m$) for those variables. Now by replacing the first $m$ ordered variables by their respective upper bounds and rest of them by a scaled vector of $\mathbf{g}_{s,\{(m+1),\cdot,(M)\}}$ we obtain following optimization problem in terms of $\lambda_m$.\begin{align*}
           \max_{\lambda_m} \quad \frac{1+\frac{(p_m+s_m\lambda_m)^2}{1+q_m+s_m\lambda_m^2}}{1+\frac{\alpha^2(p_m+s_m\lambda_m)^2}{1+\alpha^2q_m+\alpha^2s_m\lambda_m^2}}    
           \end{align*}
           The solution to the above problem can be calculated by solving the following $4^{th}$ degree polynomial $P(\lambda_m)$.
          
       \begin{multline*}
       \lambda_m^4+\frac{p_m(2+3\gamma_ss_m)}{s_m(1+\gamma_ss_m)}\lambda_m^3+\frac{3p_m^2\gamma_s}{s_m(1+\gamma_ss_m)}\lambda_m^2+\\ \frac{p_m(\gamma_sp_m^2\alpha^2+2q_m\alpha^2+\alpha^2+1)}{s^2\alpha^2(1+\gamma_ss_m)}\lambda_m-\frac{(1+q_m)(1+\alpha^2q_m)}{s^2\alpha^2(1+\gamma_ss_m)}=0
       \end{multline*}
       
           \textit{Characterization of the roots}: As we can see that the co-efficients of the polynomial $P(\lambda_m)$ are all positive except the last one and therefore there is only one variation in sign of co-efficients. Using the \textit{Descartes' rule of sign} the number of positive root has to be one.\\
           \textit{Discussion:} Based on the number of variables violating their individual constraints we can divide the feasible set in several non-overlapping subsets. For example, let us assume that the solution obtained using first approach has only the first variable (after ordering) at its boundary. This is a feasible point for individual constraints also. We denote the corresponding $r$ as $r_1$. We can write the optimal solution in terms of $r_1$ in following way: $\bm{\omega}^*=\frac{\mathbf{g_s}}{||\mathbf{g_s}||}r_1=\frac{\mathbf{h_s}}{||\mathbf{h_s}||}r_1$ and for the first variable $\omega_{(1),max}=\frac{h_{s,(1)}}{||\mathbf{h_s}||}r_1 \implies r_1 = \frac{||\mathbf{h_s}||}{h_{s,(1)}}\omega_{(1),max}=\sqrt{\frac{s_1}{h_{s,(1)}^2}\omega_{(1),max}^2+q_1}$. Similarly,   
           we can write $r_m=\sqrt{\frac{s_m}{h_{s,(m)}^2}\omega_{(m),max}^2+q_m}$ for $m \in \{1,2,\dots,(M-1)\}$ and $r_M=q_M$. Now, if we consider $r \in [r_m,r_{m+1}]$ region then the start point, i.e. $r_m$ indicates that $m$ ordered variables have reached their corresponding boundary and scaling factor $\lambda=\frac{\omega_{(m),max}}{h_{s,(m)}}$. On the otherhand at $r_{m+1}$, $(m+1)^{th}$ variable has just reached its boundary, i.e. $\lambda=\frac{\omega_{(m+1),max}}{h_{s,(m+1)}}$.\\
           \textit{Remark:} If $0 \le r \le \sqrt{\frac{s_1}{h_{s,(1)}^2}\omega_{(1),max}^2+q_1}$, then we can obtain the optimal scaling vector using first approach.
           
\subsection{General Case with degraded eavesdroppers channels}
In this subsection we evaluate the optimal $\bm{\beta}$ vector for two different kind of constraints without imposing any assumption on channel gains. At first we consider the \textit{zero forcing} solution where $\beta$ values are chosen such that the transmitted signal get canceled at eavesdropper. We formulate a quadratic program with individual constraint for this scenario. In the subsequent paragraph we formulate an optimization problem with total sum constraint on $\bm{\beta}$ vector and discuss an iterative approach for calculating the optimal value. It is easy to see that the zero forcing solution will lower bound the optimum secrecy rate for individual constraint whereas total sum constraint will upper bound the same.\\
\textbf{Zero Forcing for individual constraints:} 
In this approach we equate the $SNR_e$ to zero. As a result the equivalent optimization problem can be written as:
           \begin{subequations}\label{zfs}
           \begin{align}
           \max_{\bm{\beta}}&\quad\frac{\left(  \sum\limits_{i=1}^{M} h_{s,i}\beta_ih_{i,d}\right)  ^2}{1 + \sum\limits_{i=1}^{M}(\beta_i h_{i,d})^2}\label{zfs_obj}\\
           \text{such that}&\;  \sum\limits_{i=1}^{M} h_{s,i}\beta_ih_{i,k}=0,\; k \in\{1,2,\dots,K\}\label{zfs_eve}\\
           & -\beta_{max,i} \le \beta_i \le \beta_{max,i}, \; i \in \{1,2,\dots M \}\label{zfs_rel}
           \end{align}
           \end{subequations}
           We can also formulate following quadratic program which can be solved efficiently. 
           \begin{pavikp}
           The optimization problem \ref{zfs} is equivalent to following quadratic program
           \begin{align*}
           \max_{\mathbf{w}}\quad& \mathbf{w^tw}\\
           \text{s.t.} \quad 
           & \tilde{\mathbf{H}}_\rho\mathbf{w}=\begin{bmatrix}
           1\\
           \mathbf{0}
           \end{bmatrix} \\
           & \mathbf{H}_\beta\mathbf{w} \le \mathbf{0}
           \end{align*}
           \end{pavikp}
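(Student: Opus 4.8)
The plan is to eliminate the ratio in \eqref{zfs_obj} by a Charnes--Cooper--type homogenisation, in the spirit of the amplify-and-forward capacity computations of \cite{jing,agnihotri11}. First I would reuse the substitution $\omega_i=h_{i,d}\beta_i$, so that $\sum_i h_{s,i}\beta_i h_{i,d}=\mathbf{g_s}^t\bm\omega$ with $g_{s,i}=h_{s,i}$ (up to the positive constant $P_s/\sigma^2$, which is monotone and may be ignored for the optimisation), $\sum_i(\beta_i h_{i,d})^2=\|\bm\omega\|^2$, the zero-forcing equalities \eqref{zfs_eve} become the linear equalities $\sum_i h_{s,i}(h_{i,k}/h_{i,d})\,\omega_i=0$, and the box constraints \eqref{zfs_rel} become $|\omega_i|\le\omega_{i,\max}:=|h_{i,d}|\,\beta_{i,\max}$. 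Thus \eqref{zfs} is exactly the problem of maximising $\Phi(\bm\omega)=(\mathbf{g_s}^t\bm\omega)^2/(1+\|\bm\omega\|^2)$ over a polyhedron given by linear equalities and a box. Since this feasible set and $\Phi$ are invariant under $\bm\omega\mapsto-\bm\omega$, I may assume $\mathbf{g_s}^t\bm\omega\ge0$; and if the constraints force $\mathbf{g_s}^t\bm\omega=0$ the optimal secrecy rate is zero, a degenerate case I would dispatch separately (it will correspond precisely to infeasibility of the QP below).

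Next, I would append a homogenising coordinate $t$ and set $\mathbf w=(\omega_1,\dots,\omega_M,t)$. With $t=1$ one has $1+\|\bm\omega\|^2=\mathbf w^t\mathbf w$ and $\mathbf{g_s}^t\bm\omega=\hat{\mathbf g}^t\mathbf w$ for $\hat{\mathbf g}=(\mathbf{g_s},0)$, so $\Phi=(\hat{\mathbf g}^t\mathbf w)^2/\mathbf w^t\mathbf w$ is homogeneous of degree $0$; moreover the zero-forcing equalities and the box constraints $|\omega_i|\le\omega_{i,\max}t$ (which also force $t\ge0$) are homogeneous and linear in $\mathbf w$. Hence maximising $\Phi$ over the original feasible set equals maximising $(\hat{\mathbf g}^t\mathbf w)^2/\mathbf w^t\mathbf w$ over the polyhedral cone cut out by those homogeneous constraints (the ray $t=0$ forces $\bm\omega=\mathbf 0$, $\Phi=0$, so it is irrelevant). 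By degree-$0$ homogeneity I may normalise $\hat{\mathbf g}^t\mathbf w=1$, after which the objective is simply $1/\mathbf w^t\mathbf w$. Stacking $\hat{\mathbf g}^t$ (the effective source--destination row, with the $P_s/\sigma^2$ factor absorbed, whence the subscript $\rho$) above the $K$ effective source--eavesdropper rows yields the equality system $\tilde{\mathbf H}_\rho\mathbf w=\begin{bmatrix}1\\\mathbf 0\end{bmatrix}$, and the $2M$ inequalities $\pm\omega_i-\omega_{i,\max}t\le0$ yield $\mathbf H_\beta\mathbf w\le\mathbf 0$. So optimising $SNR_d$ subject to \eqref{zfs} is the same as optimising $\mathbf w^t\mathbf w$ over $\{\mathbf w:\tilde{\mathbf H}_\rho\mathbf w=[\,1\,;\,\mathbf 0\,],\ \mathbf H_\beta\mathbf w\le\mathbf 0\}$, the reciprocal making the two senses of optimisation opposite; this is the claimed (convex) quadratic program, which is efficiently solvable since the objective is a convex quadratic and the constraints are linear.

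Finally, I would make the reduction precise both ways: from a feasible $\bm\beta$ with $\mathbf{g_s}^t\bm\omega>0$, the point $\mathbf w=(\bm\omega,1)/(\mathbf{g_s}^t\bm\omega)$ is QP-feasible with $\mathbf w^t\mathbf w=(1+\|\bm\omega\|^2)/(\mathbf{g_s}^t\bm\omega)^2$; conversely any QP-feasible $\mathbf w$ has last coordinate $t\ne0$ (otherwise $\mathbf w=\mathbf 0$, contradicting $\hat{\mathbf g}^t\mathbf w=1$), so $\beta_i:=\omega_i/(t\,h_{i,d})$ is feasible for \eqref{zfs} with the same objective value. Together these give coincidence of the optimal values and a bijection between optimisers. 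The part I expect to be the main obstacle is not the algebra but precisely this bookkeeping: checking that the sign normalisation $\hat{\mathbf g}^t\mathbf w\ge0$ costs nothing, that $t\ne0$ at optimality so the recovery map $\mathbf w\mapsto\bm\beta$ is well defined, and that infeasibility of the QP is exactly the degenerate situation in which no nonzero $SNR_d$ survives the zero-forcing constraint.
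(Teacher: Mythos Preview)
Your Charnes--Cooper homogenisation is exactly the paper's approach: the paper introduces $v=\sum_i h_{s,i}\beta_i h_{i,d}$ and passes to the variable $(\bm\omega/v,\,1/v)$, which is precisely your $\mathbf w=(\bm\omega,1)/(\mathbf g_s^t\bm\omega)$, so the two arguments coincide (yours with more careful bookkeeping on the sign symmetry, on $t\ne0$, and on the bijection of optima). Two minor remarks: the subscript $\rho$ in $\tilde{\mathbf H}_\rho$ refers to the channel ratios $\rho_{i,k}=h_{i,k}/h_{i,d}$ that populate the eavesdropper rows, not to $P_s/\sigma^2$; and your observation that the sense of optimisation flips is correct---the stated $\max_{\mathbf w}\mathbf w^t\mathbf w$ should read $\min$, as both the paper's own proof and your convexity remark make clear.
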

           \begin{proof}
           It can be shown that the optimal solution does not change if we rewrite the objective function as $\min\;\frac{1 +\sum\limits_{i=1}^{M}(\beta_i h_{i,d})^2}{\left( \sum\limits_{i=1}^{M} h_{s,i}\beta_ih_{i,d}\right)^2}$, because $ \sum\limits_{i=1}^{M} h_{s,i}\beta_ih_{i,d}\ne 0$. If we consider a new variable $v$ such that $\sum\limits_{i=1}^{M} h_{s,i}\beta_ih_{i,d}=v$, then in terms of variable vector $\mathbf{z}=[\frac{\bm{\omega}}{v}^{\mathbf{t}},\frac{1}{v}]$, we can easily write a quadratic optimization problem, where $\omega_i=h_{i,d}\beta_i,\;\forall i \in \{1,2,\dots,M\}$. The denominator of the objective function \eqref{zfs_obj} can be written as a constraint in terms of this variable vector as, $[\mathbf{h}_s^\mathbf{t},0]\mathbf{w}=1$.  We consider two new parameters $\rho_{i,k}=\frac{h_{i,k}}{h_{i,d}},\;\forall i$ and $h_{sik}=h_{s,i}\rho_{i,k},\;\forall i$. We then define matrix $\mathbf{H}_\rho$ such that 
           $(\mathbf{H}_\rho)_{k,i}=h_{sik}, i \in \{1,2,\cdots,M\} \text { and } \forall k $ and $(\mathbf{H}_\rho)_{k,M+1}=0,\;\forall k$. 
            $\tilde{\mathbf{H}}_\rho$ is generated by concatenating the constraint due to denominator and eavesdroppers.
             $\mathbf{H}_\beta$ can be obtained by rearranging the following constraint: $-h_{d,i}\beta_{i,max}w_{M+1} \le w_i \le h_{d,i}\beta_{i,max}w_{M+1},\forall i$ 
           \end{proof}
\textbf{Optimal Secrecy rate for total Sum Constraint: }
From equation \eqref{eq:opt1} we can rewrite the equivalent optimization problem in following manner:
\begin{align*}
\max_\eta\;\;\max_{\bm{\beta}}\;\;& \frac{1+SNR_d(\bm{\beta})}{1+\eta}\\
\text{s.t.}\;\; & SNR_k(\bm{\beta})\le \eta,\;k \in \{1,2,\dots,K\}\\
& \bm{\beta}^t\bm{\beta} \le \beta_{tot} 
\end{align*}

\textit{Our Approach}: This problem becomes a single dimensional optimization problem if we know the solution of inner optimization problem for a fixed $\eta$. Then we can search over the range of $\eta$ for the optimum $\eta$ at which the maximum objective function value is attained. We can use bi-section algorithm to find the optimal $\eta$ and the corresponding optimum decision variable $\bm{\beta}$.

\par \textit{Range of $\eta$:} As $\eta=0$ will result in sub-optimal zero-forcing solution,
so we can start with small values of $\eta$, typically in the range of $10^{-6}$. The upper bound on $\eta$ can be calculated by solving the following optimization problem:
\begin{subequations}\label{eta_range}
\begin{align}
\max_{\bm{\beta}}&\;\frac{(\mathbf{h_{s,k}^t}\bm{\beta})^2}{1+\bm{\beta}^t\mathbf{D_k}\bm{\beta}}\\
\text{s.t.}&\;\bm{\beta}^t\bm{\beta} \le \beta_{tot} 
\end{align}
\end{subequations}
$\mathbf{h_{s,k}}=[h_{s,1}h_{1,k},h_{s,2}h_{2,k},\cdots,h_{s,M}h_{M,k}]^\mathbf{t}$, $\mathbf{D_k}=diag(h_{1,k}^2,h_{2,k}^2,\cdots, h_{M,k}^2) \quad \forall k \in \{1,2,\cdots,K\}$. The solution of this problem is $\eta_{k,max}=\mathbf{h_{s,k}}^t\mathbf{\tilde{D}_k}^{-1}\mathbf{h_{s,k}}$, where
$\mathbf{\tilde{D}_k}=\frac{1}{\beta_{tot}}\mathbf{I}+\mathbf{D_k}$ 
 and $\eta_{max}=\underset{k}{\max}\quad \eta_{k,max}$.
 \par We consider the following transformation for converting the non-convex problem into a convex one.
 \begin{subequations}\label{eq:trnsfrm}
 \begin{align}
 \omega_i=\beta_ih_{i,d}&,\;\text{and } v_i=\frac{\omega_i}{\sqrt{1+\sum\limits_{i=1}^{M}\omega_i^2}}\\
 \text{or compactly }& \mathbf{v}=\frac{\bm{\omega}}{\sqrt{1+\bm{\omega^t\omega}}} \Leftrightarrow \bm{\omega}=\frac{\mathbf{v}}{\sqrt{1-\mathbf{v^tv}}}
 \end{align}
 \end{subequations}
 \begin{itemize}
 \item \textbf{Objective function}: For a fixed $\eta$ our objective is to maximize $(\sum\limits_{i=1}^{M}h_{s,i}v_i)^2$ or equivalently $\mathbf{v^th_sh_s^tv}$.
 \item \textbf{Eavesdroppers constraint}: From equation \eqref{eq:snr_vec} eavesdroppers' SNR constraints can be written as:
 \begin{align*}\label{matP}
  SNR_k=&\frac{(\mathbf{h_{s,k}^t}\bm{\beta})^2}{1+\bm{\beta}^\mathbf{t}diag(\mathbf{h_k})\bm{\beta}}\frac{P_s}{\sigma^2} \le \eta,\; \forall k \in \{1,2,\cdots,K\}
  \end{align*}
  Using the same transformation on the SNR constraint due to eavesdropper we get:
 \begin{align*}
 \frac{\bm{\omega}^t(\mathbf{h}_{s\rho,k}\mathbf{h}_{s\rho,k}^t)\bm{\omega}}{1+\bm{\omega}^t\mathbf{D}_{\rho k}\bm{\omega}}\frac{P_s}{\sigma^2} \le \eta
 \end{align*}
 where $\mathbf{h}_{s\rho k}=[h_{s,1}\rho_{1,k},h_{s,2}\rho_{2,k},\cdots,h_{s,M}\rho_{M,k}]^\mathbf{t}, \mathbf{D}_{\rho k}=diag([\rho_{1,k}^2,\rho_{2,k}^2,\cdots,\rho_{M,k}^2])$ and $\rho_{i,k}=\frac{h_{i,k}}{h_{i,d}},\;\forall i$ 
 We transform the above expression in terms of $\mathbf{v}$ and rearrange it to obtain:
 \begin{align*}
   \mathbf{v^t}\mathbf{C}_k(\eta)\mathbf{v} \le 1 \text{ where }
   \mathbf{C}_k(\eta)=&\frac{\mathbf{h}_{s\rho k}\mathbf{h}_{s\rho k}^\mathbf{t}}{\eta}\frac{P_s}{\sigma^2}+\mathbf{I}-\mathbf{D}_{\rho k}\quad 
 \end{align*}
  If $\rho_{i,k}=\frac{h_{i,k}}{h_{i,d}},\;\forall i,k$ then $\mathbf{I}-\mathbf{D}_{\rho k}$ is a diagonal matrix with positive entries, therefore, $\mathbf{C}_k$ is a positive definite matrix.
 \item \textbf{Total Constraint}: Total constraint can be written in the following vector notations:
 \begin{align*}
 &\sum\limits_{i=1}^{M}\beta_i^2 \le \beta_{tot}\\ 
 \implies & \sum\limits_{i=1}^{M}\frac{\omega_i^2}{h_{i,d}^2}  \le \beta_{tot} \\
 \implies & \sum\limits_{i=1}^{M}\frac{v_i^2}{h_{i,d}^2(1-\sum\limits_{i=1}^{M}v_i^2)}\le \beta_{tot} \\ 
 \implies & \sum\limits_{i=1}^{M}(1+\frac{1}{h_{i,d}^2\beta_{tot}})v_i^2 \le 1\\
 \text{ or }& \mathbf{v^tD_Tv}\le 1 \text{ where } \mathbf{D_T}=diag([(1+\frac{1}{h_{i,d}^2\beta_{tot}}),\forall i ]) 
 \end{align*}
 \end{itemize}
 As $h_{s,i}>0,\;\forall i$ and the constraints are quadratic in nature, so we claim that we can replace the quadratic objective function by a linear one \textit{i.e.}, $\mathbf{h_s^tv}$. Though maximum value of $(\mathbf{h_s^tv})^2$ for the given constraints can be obtained by finding the maximum and minimum value of linear objective $\mathbf{h_s^tv}$ for those same constraints, but those values will be indeed same. We can argue that using contradiction. Let us assume that the solutions of maximization and minimization problem are $\widehat{\mathbf{v}}^*$ and $\widetilde{\mathbf{v}}^*$, respectively. Now, if $\mathbf{h_s^t\widehat{v}^*}<|\mathbf{h_s^t\widetilde{v}^*}|$, then we can find a vector $-\widetilde{\mathbf{v}}^*$ which will not only satisfy the constraints but also has higher objective function value than $\widehat{\mathbf{v}}^*$. Hence, $\widehat{\mathbf{v}}^*$ cannot be optimum. Similar argument can be presented for minimization problem also and thus it proves our claim.
 \par For a fixed $\eta$ the reformulated optimization problem becomes:
 \begin{subequations}\label{opt}
 \begin{align}
 \max_{\mathbf{x}}\;\;& \mathbf{h_s^tv}\\
 \text{s.t.}\qquad & \mathbf{v^t}\mathbf{C}_k(\eta)\mathbf{v} \le 1,\;k \in \{1,2,\dots K\}\\
 & \mathbf{v^tD_Tv}\le 1
 \end{align}
 \end{subequations}
 Now, it is easy to see that $\mathbf{v^tD_Tv} \le 1$ is a M dimensional ellipsoid, also, $\mathbf{C}_k(\eta),\;k \in \{1,2,\dots,K\}$ are positive semidefinite symmetric matrix.
 Hence, this is a convex optimization problem and
  therefore global solution can be obtained using numerical routines. Once the optimal solution for a particular $\eta$ is obtained, then we can calculate corresponding secrecy rate by evaluating $\frac{1+(\mathbf{h_s^tv}^*(\eta))^2}{1+\eta}$. Now, we use any line search method to calculate optimal $\eta^*$ due to following proposition. We have used golden-section search for generating the results.
 \begin{pavikp}
 $\frac{1+(\mathbf{h_s^tv}^*(\eta))^2}{1+\eta}$ is an unimodal function of $\eta$ in the range $[0, \infty)$
 \end{pavikp}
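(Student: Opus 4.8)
The plan is to reduce the statement to a single concavity fact about the inner optimum and then conclude by a routine quasi-concavity argument. Write $r(\eta):=(\mathbf{h_s^tv}^{*}(\eta))^2$ for the optimal value of problem \eqref{opt} at threshold $\eta$, so the function in question is $g(\eta)=\frac{1+r(\eta)}{1+\eta}$; abbreviate $\mathcal{F}(\eta)=\{\mathbf{v}:\mathbf{v}^{t}\mathbf{C}_k(\eta)\mathbf{v}\le1\ \forall k,\ \mathbf{v^tD_Tv}\le1\}$ and set $\mathbf{A}_k=\frac{P_s}{\sigma^2}\mathbf{h}_{s\rho k}\mathbf{h}_{s\rho k}^{t}\succeq\mathbf{0}$, $\mathbf{B}_k=\mathbf{I}-\mathbf{D}_{\rho k}\succ\mathbf{0}$, so that $\mathbf{C}_k(\eta)=\frac{1}{\eta}\mathbf{A}_k+\mathbf{B}_k$. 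I would first record the easy structural facts: $\mathbf{C}_k(\eta)$ is non-increasing in $\eta$ in the positive-semidefinite order, hence $\mathcal{F}(\eta)$ grows with $\eta$ and $r$ is non-decreasing; the $\eta$-free constraint $\mathbf{v^tD_Tv}\le1$ keeps $r$ bounded above; and $r(0^{+})$ is the zero-forcing value, so $g$ extends continuously to $\eta=0$. The target is then: $r$ is concave on $[0,\infty)$.

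Granting concavity of $r$, the conclusion is immediate. For every level $c$, the super-level set $\{\eta:g(\eta)\ge c\}=\{\eta:r(\eta)-c\eta\ge c-1\}$ is a super-level set of the concave function $\eta\mapsto r(\eta)-c\eta$, hence an interval; thus $g$ is quasi-concave on $[0,\infty)$ and therefore unimodal. Equivalently, at points of differentiability $\operatorname{sgn}g'(\eta)=\operatorname{sgn}(r'(\eta)(1+\eta)-1-r(\eta))$, and the bracketed term has derivative $r''(\eta)(1+\eta)\le0$, so it is non-increasing and changes sign at most once, from $+$ to $-$; this is exactly unimodality on an interval.

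The substantive work is concavity of $r$, and this is where I expect the real difficulty. The easy half is that $\phi(\eta):=\max\{\mathbf{h_s^tv}:\mathbf{v}\in\mathcal{F}(\eta)\}=\sqrt{r(\eta)}$ — the linear reformulation already used above — is concave: writing each eavesdropper constraint as $\frac{1}{\eta}(\mathbf{h}_{s\rho k}^{t}\mathbf{v})^2\frac{P_s}{\sigma^2}+\mathbf{v}^{t}\mathbf{B}_k\mathbf{v}\le1$ and using that $(\eta,y)\mapsto y^2/\eta$ is jointly convex on $\eta>0$, the set $\{(\eta,\mathbf{v}):\eta>0,\ \mathbf{v}\in\mathcal{F}(\eta)\}$ is convex, so $\phi$ — a partial maximization of a linear functional over a convex set — is concave (and non-negative and non-decreasing). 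The obstacle is that this does not yield concavity of $r=\phi^2$ for free. I would close the gap by a sensitivity analysis of the KKT system. In the single-eavesdropper case with the total-power constraint slack, $r(\eta)=\mathbf{h_s}^{t}\mathbf{C}_1(\eta)^{-1}\mathbf{h_s}$, and applying the Sherman--Morrison identity to $\mathbf{C}_1(\eta)=\mathbf{B}_1+\frac{1}{\eta}\mathbf{a}\mathbf{a}^{t}$ with $\mathbf{a}=\sqrt{P_s/\sigma^2}\,\mathbf{h}_{s\rho 1}$ gives
\begin{equation*}
r(\eta)=\mathbf{h_s}^{t}\mathbf{B}_1^{-1}\mathbf{h_s}-\frac{(\mathbf{h_s}^{t}\mathbf{B}_1^{-1}\mathbf{a})^2}{\eta+\mathbf{a}^{t}\mathbf{B}_1^{-1}\mathbf{a}},
\end{equation*}
which is manifestly concave and non-decreasing. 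For the general case I would partition $[0,\infty)$ into intervals on which a fixed subset of the ellipsoidal constraints is active; on such an interval stationarity gives $\mathbf{v}^{*}=\mathbf{M}(\eta)^{-1}\mathbf{h_s}$ with $\mathbf{M}(\eta)=\frac{1}{\eta}\sum_k\lambda_k\mathbf{A}_k+(\sum_k\lambda_k\mathbf{B}_k+\lambda_0\mathbf{D_T})$, and the goal is to show that $r$ restricted there is again of the concave rational type $C_1-C_2/(\eta+C_3)$ (or a sum of such), the pieces joining with matching value and slope — the slope being $r'(\eta)=\frac{2\phi(\eta)}{\eta^2}\sum_k\lambda_k\,\mathbf{v}^{*t}\mathbf{A}_k\mathbf{v}^{*}\ge0$ by the envelope theorem — so that concavity survives across the breakpoints.

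In short, essentially all of the effort is in verifying that $r$ itself (not merely $\phi=\sqrt{r}$) remains concave when several $\eta$-dependent ellipsoids and the fixed total-power ellipsoid bind simultaneously; once that is in hand, the unimodality conclusion is routine. As a fallback, should the general concavity resist, one can argue unimodality directly by showing that the numerator of $g'(\eta)$ — which in the tractable cases is a downward parabola in $\eta$ that is monotone decreasing on $[0,\infty)$ — has at most one zero there.
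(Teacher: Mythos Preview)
Your approach is substantially more ambitious than the paper's. The paper gives only a heuristic: for small $\eta$ the eavesdropper ellipsoids are the binding constraints and expand with $\eta$, so the numerator $(\mathbf{h_s^tv}^{*}(\eta))^2$ grows and $g$ increases; for large $\eta$ only the fixed ellipsoid $\mathbf{v^tD_Tv}\le1$ binds, the numerator saturates, and $g$ decreases; hence a maximum lies in between. Strictly speaking this is an endpoint argument and does not exclude multiple interior extrema, so the paper's own proof is not fully rigorous either.

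Your route --- prove $r(\eta)=(\mathbf{h_s^tv}^{*}(\eta))^2$ concave, whence $g=(1+r)/(1+\eta)$ is quasi-concave and therefore unimodal --- is genuinely different. The reduction is clean, the concavity of $\phi=\sqrt{r}$ via joint convexity of $(\eta,y)\mapsto y^2/\eta$ and partial maximization of a linear functional over a convex set is correct and elegant, and the Sherman--Morrison computation for the single-eavesdropper, slack-power regime is right. But the gap you yourself flag is real: concavity of $\phi$ does not imply concavity of $\phi^2$, and your piecewise KKT sketch does not obviously deliver the $C_1-C_2/(\eta+C_3)$ form once several $\eta$-dependent rank-one constraints and the $\eta$-free ellipsoid bind simultaneously, because the multipliers $\lambda_k$ are themselves functions of $\eta$ and the inverse of $\mathbf{M}(\eta)$ need not be a simple rational function of $\eta$. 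Your fallback (direct sign analysis of the numerator of $g'$) is in spirit the paper's argument, but making it rigorous still requires a monotonicity property beyond the endpoint behaviour the paper invokes.

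In summary: different route, considerably more structure uncovered (concavity of $\phi$, the explicit single-eavesdropper formula), but the decisive step --- concavity of $r$ in the general multi-constraint case --- is not closed in your write-up, and the paper's own argument does not close it either.
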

 \begin{proof}
For lower values of $\eta$, eavesdroppers' constraints of optimization problem \eqref{opt} are dominating and as $\eta$ increases, so the volume of ellipsoids corresponding to those constraints. This results in enlargement of the feasible region, which causes increment in the objective function value of problem \eqref{opt}. Therefore, for values around $\eta=0$, $\frac{1+(\mathbf{h_s^tv}^*(\eta))^2}{1+\eta}$ increases with $\eta$. For higher values of $\eta$, $\mathbf{v^tD_Tv}\le 1$ is dominating constraint and objective function value become constant for those values of $\eta$. So, as $\eta$ increases objective function starts decreasing. Hence, there is an intermediate point where $\frac{1+(\mathbf{h_s^tv}^*(\eta))^2}{1+\eta}$ reaches maximum value. 
 \end{proof}
In the following subsection we discuss the single eavesdropper scenario with sum relay constraint and characterize the solution of secrecy rate maximization problem. 
\subsection*{\textbf{Characterization of Solution for single eavesdropper scenario}}
 \begin{itemize}
 \item \textbf{Case 1}: If $\lambda_{min}(\mathbf{C})$ is the minimum eigen value of matrix $\mathbf{C}$ or $\mathbf{D}$ then as long the maximum eigen value of $\lambda_{max}(\mathbf{D_T}) < \lambda_{min}(\mathbf{C})$, the constraint $\mathbf{x^tD_Tx}\le 1$ will be inactive and hence we are left with maximization of quadratic objective with a quadratic equality constraint. This is indeed generalized Rayleigh-quotient \cite[p.~176]{horn1985} and the solution of this problem can be easily obtained by calculating the eigen vector corresponding to maximum eigen value of the matrix $\mathbf{C}(\eta)^{-1}\mathbf{h_sh_s}^t$. To calculate the eigen value of the matrix we can solve the following equation:
 \begin{align*}
 \mathbf{C}(\eta)^{-1}\mathbf{h_sh_s^t}\mathbf{v}=\lambda\mathbf{v}
 \end{align*}
 As $\mathbf{h_s}^t\mathbf{v}$ is a scalar, then $\mathbf{C}(\eta)^{-1}\mathbf{h_s}$ indeed lies in the direction of $\mathbf{v}$ and infact by neglecting scale factor we can write $\mathbf{v}=\mathbf{C}(\eta)^{-1}\mathbf{h_s}$.
 Therefore, for this fixed $\eta$ the secrecy objective function becomes:
 \[R_s(\eta)=\frac{1+\mathbf{h_s^tC}(\eta)^{-1}\mathbf{h_s}}{1+\eta}\]
 \item \textbf{Case 2}: When the criteria mentioned above is not satisfied then both the constraints might be active, so we use following two step approach:
 \begin{itemize}
 \item We solve the problem considering the first constraint only. If the solution obtained ($\mathbf{x}^*$) satisfy the second constraint then this is the solution else we discard it and follow the next step.
 \item In this case both the constraints are active and we have to solve the following problem.
 \begin{subequations}
 \begin{align}
 \max_{\mathbf{x}}\;\;& \mathbf{x^th_sh_s^tx}\\
 \text{s.t.}\qquad & \mathbf{x}^T\mathbf{C}(\eta)\mathbf{x} = 1\\
 & \mathbf{x^tD_Tx}= 1
 \end{align}
 \end{subequations}
 \end{itemize}
 \end{itemize}
 Now for this problem we can use any suitable numerical routine to find the global optimum which should satisfy the criteria mentioned in \cite{raey}.
 \subsection*{\textbf{Optimal Secrecy rate for individual relay constraints}} The objective function and eavesdroppers' constraints remain same as it was for total relay constraint scenario. But, unlike the previous case, instead of single relay constraint we have $M$ relay constraints corresponding to each of the relay nodes. The individual relay constraint and the transformed one is presented below:
 : 
       \begin{align*}
       & \beta_i^2=\frac{\omega_i^2}{h_{i,d}^2}=\frac{v_i^2}{h_{i,d}^2(1-\mathbf{v^tv})} \le \beta_{i,max}^2,\;\forall i \in \{1,2,\cdots,M\}\\
       \implies & \mathbf{v^tv}+\frac{v_i^2}{h_{i,d}^2\beta_{i,max}^2} \le 1 \text{ or } \mathbf{v^tD_iv} \le 1\\
        \text{where }&(\mathbf{D_i})_{jk}=
       \begin{cases}
       & 1+ \frac{v_i^2}{h_{i,d}^2\beta_{i,max}^2},\;\text{if }k=j=i\\
       & 1 ,\text{ if }k=j\ne i\\
       & 0, \text{ otherwise } 
       \end{cases}
       \end{align*}
   We can upper bound the $\eta_{k,max},\;\forall k$ in this scenario by using the solution of optimization problem \eqref{eta_range}. We use $\beta_{tot}= \sum\limits_{i=1}^{M}\beta_{i,max}^2$ in this case to evaluate the upper bound of $\eta_{k,max}$, thereafter to be denoted as $\widehat{\eta}_{k,max},\; \forall k$. Similarly, $\widehat{\eta}_{\max}=\underset{k}{\max}\;\widehat{\eta}_{k,max}$. The inner optimization problem for a fixed $\eta$ can be written as:
   \begin{subequations}\label{opt_indv}
         \begin{align}
         \max_\mathbf{v}&\; \mathbf{h_s^tv}\\
         \text{such that}&\;\mathbf{v^tC_k}(\eta)\mathbf{v} \le 1,\;k \in \{1,2,\dots K \}\\
         & \mathbf{v^tD_iv} \le 1, \; i \in \{1,2,\dots M \}
         \end{align}
          \end{subequations}
 Similar argument for unimodularity of $\frac{1+(\mathbf{h_s^tv}^*(\eta))^2}{1+\eta}$ as function of  $\eta$ within the range $[0,\widehat{\eta}_{\max}]$ can be presented in this case also. Hence, by using golden-section search we can obtain optimal $\eta^*$ and thereby optimal $\bm{\beta}^*$ for individual relay constraint scenario with multiple eavesdroppers.
\par 
 \textbf{Convergence and Iterations:} The $f(\eta)=\frac{1+(\mathbf{h_s^Tx^*})^2}{1+\eta}$ is a continuous function of $\eta$ within the interval $[0,\eta_{max}]$, where $\mathbf{x^*}$ is the optimal solution of problem \eqref{opt}. Also, it is well known that for golden-section search method \cite[chap.~2.1]{burden1989numerical} after $n$ iterations the updated interval can be written as:
 \begin{align*}
 \eta_u^{(n)}-\eta_l^{(n)}=\tau^n\eta_{max}\;\text{ and } \eta^* \in [\eta_l^{(n)},\eta_u^{(n)}]
 \end{align*}
 where $\tau=0.618$.
 In other words  
  $(\eta_u^{(n)}-\eta^*) \le \tau^n\eta_{max}$ and $(\eta^* -\eta_l^{(n)}) \le \tau^n\eta_{max}$ and therefore the sequence $\{\eta_l^{(n)}\}_{n=1}^\infty$ and $\{\eta_u^{(n)}\}_{n=1}^\infty$ linearly converges to $\eta^*$ \cite{wilde1964optimum}. 
 \par If we consider a tolerance parameter $\delta$, we can easily estimate the number of the iterations required.
 \begin{align*}
 \tau^N\eta_{max} \le \delta \implies N \ge 2.08\ln\frac{\eta_{max}}{\delta}
 \end{align*}
  \section{Results}\label{sec:res}
  For evaluation of secrecy rate we consider a network whose main channel gains are sampled from a Rayleigh distribution with parameter 0.5. To obtain the degraded channels for eavesdroppers, we multiply the relay to destination channel gains with the samples from \textit{Uniform distribution}$[0,1]$. We average the results of 100 such networks while plotting the graphs.
  In Figure \ref{fig_PsRsbeta} we plot the variation of optimal $\beta$ value and secrecy rate ($R_s$) with respect to source power ($P_s$) for symmetric network case. As the source power increases the bounds on $\beta$ value keep contracting and therefore the optimal $\beta$ value starts declining. This results in saturation of the secrecy rate. 
  \begin{figure}[!h]
  \centering
  \includegraphics[scale=0.65]{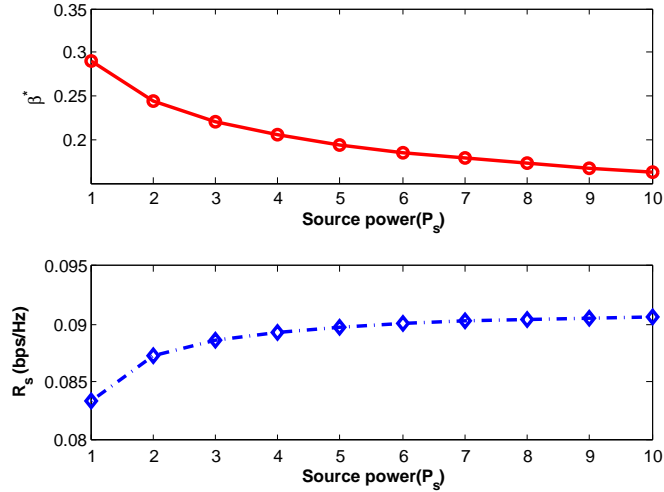}
  \caption{Plot of optimal $\beta$ and secrecy rate ($R_s$) with respect to $P_s$ for symmetric network case.}
  \label{fig_PsRsbeta}
  \end{figure}
  In figure \ref{fig_obsrv_bar_plot} we compare the secrecy rate obtained using proposed iterative approach with the optimal solution (solving \eqref{eq:secrate} directly using numerical routines) for randomly generated channel values. It is apparent that the outputs of the proposed iterative are equal to the corresponding optimal values. 
  \begin{figure}[!h]
  \centering
  \includegraphics[scale=0.65]{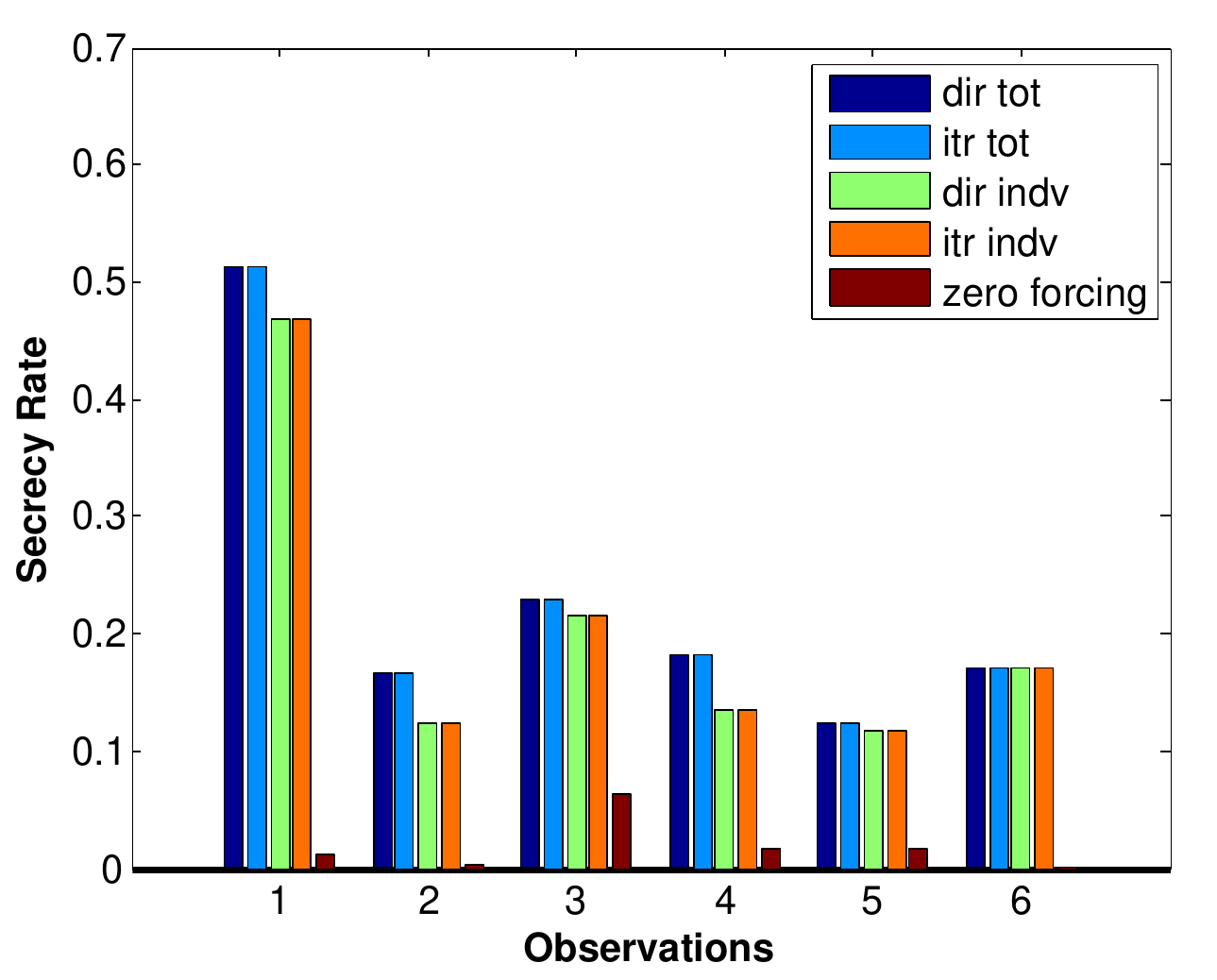}
  \caption{Comparison of secrecy rate ($R_s^*$) obtained for 5 relay node diamond network with 3 eavesdroppers using iterative approach, direct and zero-forcing formulation. Here we used $P_r=5$, $P_s=1$, $\sigma^2=1$. }
  \label{fig_obsrv_bar_plot}
  \end{figure}
  \begin{figure}[!h]
    \centering
    \includegraphics[scale=0.65]{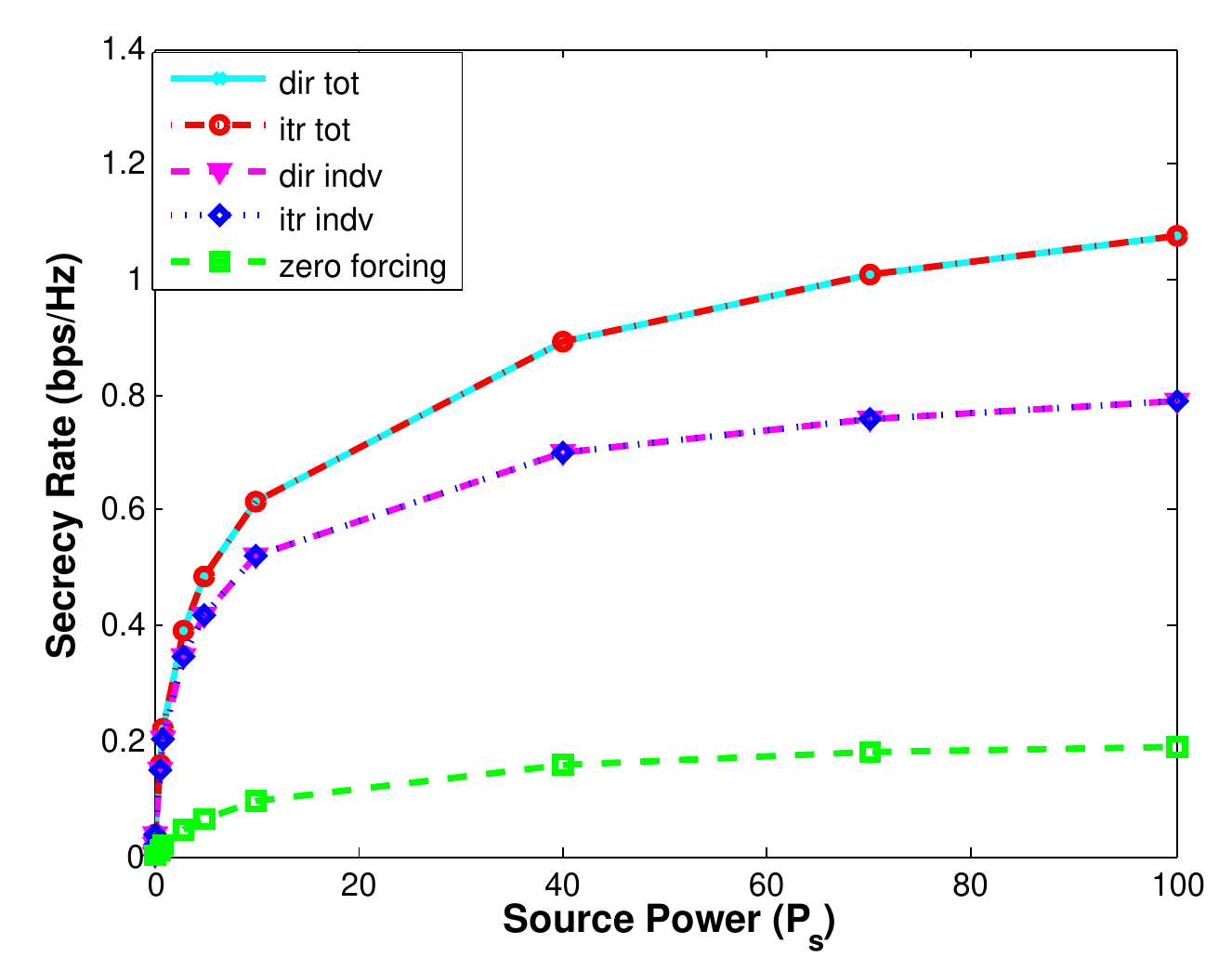}
    \caption{Plot of optimal secrecy rate ($R_s^*$) vs. source power ($P_s$) for five (5) relay node diamond network with three (3) eavesdroppers using iterative approach, direct and zero-forcing formulation. Here we considered $P_r=5$ and $\sigma^2=1$ }
    \label{fig_secrate}
    \end{figure}
    In Figure \ref{fig_secrate} we plot the secrecy rate  with respect to source power ($P_s$) for direct solutions and the solutions obtained using iterative algorithm and zero forcing approach. The reason behind the shape of these curves is already discussed in context of Figure \ref{fig_PsRsbeta}.
    \begin{figure}[!h]
       \centering
       \includegraphics[scale=0.65]{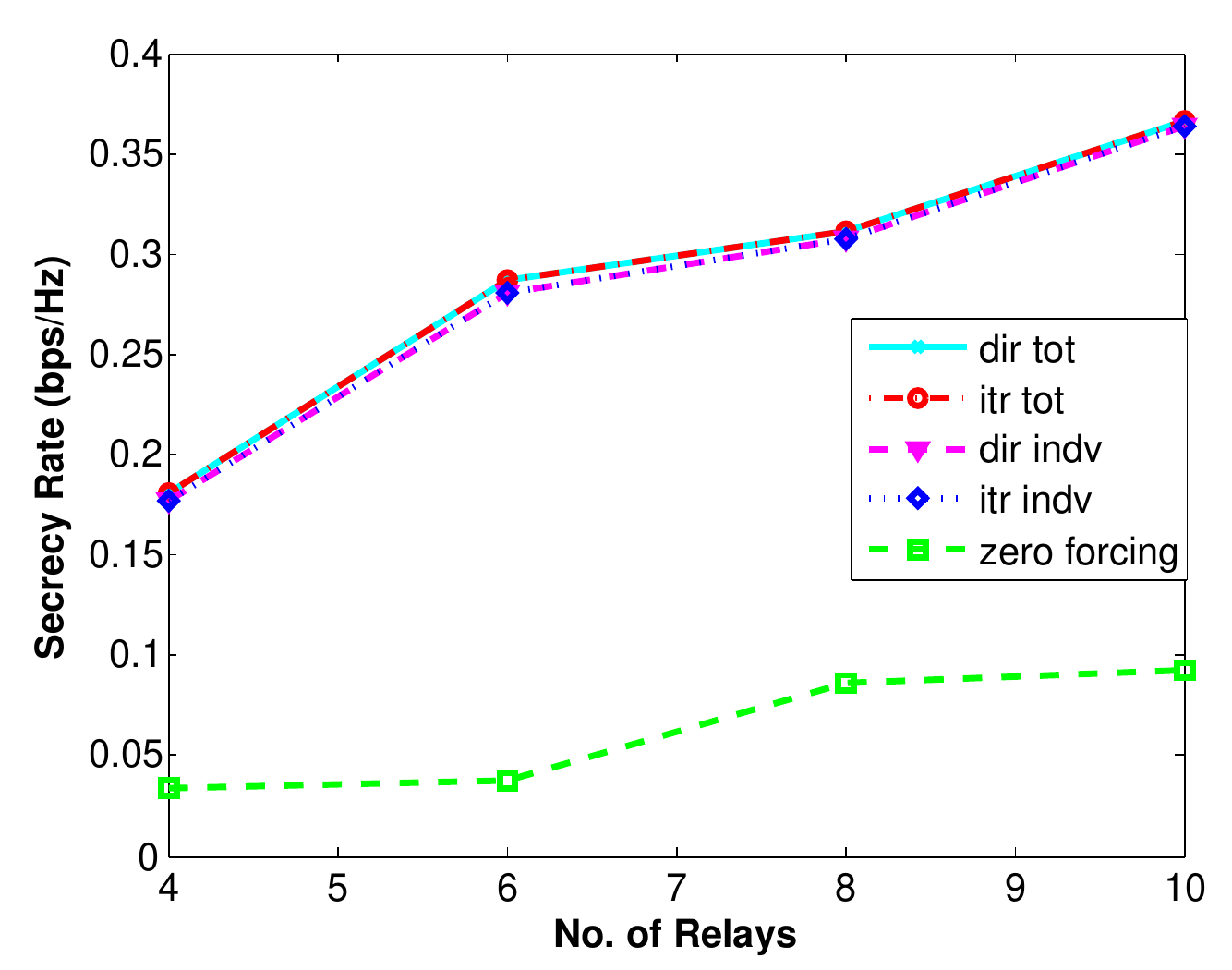}
       \caption{Plot of Secrecy rate ($R_s^*$) vs. No. of relay nodes ($M$) for diamond network with three (3) eavesdroppers for  iterative approach, direct and zero-forcing formulation. The parameters used are $P_r=5$, $P_s=1$, $\sigma^2=1$. }
       \label{fig_secvsrelays}
       \end{figure}
  Figure \ref{fig_secvsrelays} depicts variation of secrecy with respect to number of relay nodes deployed. As the number of relay node increases new paths from source to destinations are available and therefore, by choosing proper scaling factor we can achieve better secrecy rate. This applies for all three schemes (zero forcing, individual constraint and sum constraint) and is also evident from the plot.
 
  \section{Conclusion}\label{sec:concl}
  We have calculated the optimal scaling vector for two-hop amplify and forward (AF) to obtain the optimum secrecy rate in the presence of multiple eavesdroppers. We begin with considering the special channel conditions for the diamond network and gradually moved to general scenario. Analytical solution for special channel conditions and numerical solution for general scenario is proposed. In future we would like to investigate optimum secrecy rate of a general AF network. Also, as friendly jamming improves the secrecy rate in several scenarios, we would like to investigate the impact of jamming on secrecy rate in AF networks.

\end{document}